\title{Normality and Finite-State Dimension of Liouville Numbers}
\author{Satyadev Nandakumar  \and 
       Santhosh Kumar Vangapelli}
\institute{Department of Computer Science,
Indian Institute of Technology Kanpur,
Kanpur, Uttar Pradesh, India.
\and
Google Inc.,
Hyderabad, Andhra Pradesh, India.}
\titlerunning{Normality of Liouville Numbers}
\authorrunning{S. Nandakumar and Santosh Kumar Vangepalli}
\date\today
\newcommand{\N}{\mathbb{N}}
\newcommand{\Z}{\mathbb{Z}}
\renewcommand{\dim}{{\mathrm{dim}}}
\begin{document}
\maketitle
\begin{abstract}
Liouville numbers were the first class of real numbers which were
proven to be transcendental. It is easy to construct non-normal
Liouville numbers.  Kano \cite{Kano93} and Bugeaud \cite{Bugeaud02}
have proved, using analytic techniques, that there are normal
Liouville numbers. Here, for a given base $k \ge 2$, we give a new
construction of a Liouville number which is normal to the base
$k$. This construction is combinatorial, and is based on de Bruijn
sequences.

A real number in the unit interval is normal if and only if its
finite-state dimension is 1. We generalize our construction to prove
that for any rational $r$ in the closed unit interval, there is a
Liouville number with finite state dimension $r$. This refines
Staiger's result \cite{Stai02} that the set of Liouville numbers has
constructive Hausdorff dimension zero, showing a new quantitative
classification of Liouville numbers can be attained using finite-state
dimension. We also give a construction of a Liouville number normal in
finitely many bases, provided a generalized version of Artin's
conjecture holds.
\end{abstract}

\section{Introduction}
One of the important open questions in the study of normality is
whether any algebraic irrational number is normal. On the other hand,
it is known that there are normal transcendentals as well as
non-normal transcendentals. For example, Mahler has proved that the
Champernowne constant \cite{Mahler37} as well as the Thue-Morse
\cite{Mahler29} constant are transcendental. However, the Champernowne
constant is normal \cite{Cham33}, whereas the Thue-Morse constant is
not, and has finite-state dimension 0 \cite{Bachan05}.

Liouville numbers were the first class of numbers which were proven
transcendental. In this paper, we show that there are Liouville
numbers which are non-normal, and others which are normal. Indeed,
there are Liouville numbers of every rational finite-state dimension
between 0 and 1 (definitions follow in Section 3).

Examples of non-normal Liouville numbers are well-known. Normal
Liouville numbers are harder to construct, and there are works by Kano
\cite{Kano93} and Bugeaud \cite{Bugeaud02} which establish the
existence of such numbers. Kano constructs, for any bases $a$ and $b$,
Liouville numbers which are normal in base $a$ but not in base
$ab$. Bugeaud gives a non-constructive proof using Fourier analytic
techniques that there are Liouville numbers which are absolutely
normal -- that is, normal in all bases. In this paper, we give a
combinatorial construction of a number that is normal to a given base
$b$. The construction is elementary.  Thus the Liouville numbers forms
a class of numbers whose transcendence is easy to establish, and which
contain simple examples of normal and non-normal numbers.

The set of normal numbers coincide exactly with the set of numbers
with finite-state dimension 1 \cite{SchnorrStimm72},
\cite{Dai:FSD}. We show that the combinatorial nature of our
construction lends itself to the construction of Liouville numbers of
any finite-state dimension. Thus we get a quantitative classification
of non-normal Liouville numbers. This classification is new, since the
set of Liouville numbers has classical Hausdorff dimension and even
effective Hausdorff Dimension zero \cite{Stai02}.

We begin with a survey briefly explaining Liouville's approximation
theorem and defining the class of Liouville numbers. Section
\ref{sec:liouville_disj} constructs a Liouville number which is
disjunctive -- that is to say, has all strings appearing in its base
$b$ expansion-- but is still not normal. The subsequent section gives
the construction of a normal Liouville number.

\section{Liouville's Constant and Liouville Numbers}
Liouville's approximation theorem says that algebraic irrationals are
inapproximable by rational numbers to arbitrary precisions. 
\begin{theorem} [Liouville's Theorem]
Let $\beta$ be a root of $f(x) = \sum_{j=0}^{n} a_j x^j \in \Z[x]$.
Then there is a constant $C_\beta$ such that for every pair of
integers $a$ and $b$, $b > 0$, we have
$\left| \beta - \frac{a}{b} \right| > \frac{C_\beta}{b^n}$.
\end{theorem}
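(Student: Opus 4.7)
The plan is to combine an integrality lower bound on $|f(a/b)|$ with a Mean Value Theorem upper bound, then choose $C_\beta$ to absorb the resulting constants. First I dispose of the easy case: if $|\beta - a/b| \geq 1$, then any $C_\beta \leq 1$ satisfies the inequality, since $b^n \geq 1$. So I focus on rationals $a/b$ lying in the compact interval $I = [\beta-1,\beta+1]$, on which the continuous function $|f'|$ attains a finite maximum $M = \sup_{x \in I}|f'(x)|$. I silently assume $\beta$ is irrational and take $f$ to be its minimal polynomial; this ensures $f$ has no rational roots.

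For the integrality step, clearing denominators gives
\[
b^n f(a/b) \;=\; \sum_{j=0}^n a_j\, a^j\, b^{n-j} \;\in\; \Z,
\]
and this integer is nonzero because $f(a/b) \neq 0$ for every rational $a/b$. Hence $|f(a/b)| \geq b^{-n}$.

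For the analytic upper bound, I apply the Mean Value Theorem to $f$ on the segment between $a/b \in I$ and $\beta$: since $f(\beta)=0$, there exists $\xi$ in that segment (hence in $I$) with $f(a/b) = f'(\xi)(a/b - \beta)$, so that $|f(a/b)| \leq M\,|a/b - \beta|$. Combining with the integrality bound yields $|\beta - a/b| \geq 1/(M b^n)$, and setting $C_\beta = \min\{1, 1/M\}$ handles both cases uniformly. The only point needing care --- hardly an obstacle --- is verifying $0 < M < \infty$: finiteness is immediate from continuity of $f'$ on the compact $I$, and positivity follows because $f'$ is a nonzero polynomial (as $f$ is nonconstant), so cannot vanish on a full interval.
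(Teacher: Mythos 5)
The paper states Liouville's theorem as classical background and never proves it, so there is no in-paper argument to compare against; your proof is the standard textbook one (integrality of $b^n f(a/b)$ combined with a Mean Value Theorem bound on a compact neighbourhood of $\beta$), and it is essentially correct. Three small remarks. First, the statement as printed is literally false for rational $\beta$ (take $a/b=\beta$), so the assumption you flag as ``silent'' --- that $\beta$ is irrational --- is genuinely needed, and it is exactly what makes the key step $f(a/b)\neq 0$ valid once $f$ is the minimal polynomial; note also that the minimal polynomial may have degree $d<n$, so you should add the one-line observation that the bound with exponent $d$ implies the bound with exponent $n$ because $b^d\le b^n$ for $b\ge 1$. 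Second, the theorem asserts a strict inequality, while your two cases only give $|\beta-a/b|\ge 1$ and $|\beta-a/b|\ge 1/(Mb^n)$, where equality is at least conceivable; taking $C_\beta=\tfrac{1}{2}\min\{1,1/M\}$ (or any strictly smaller constant) removes the issue at no cost. Third, your check that $0<M<\infty$ is correct and worth keeping, since $1/M$ enters the constant. None of this affects the substance: the split into the far case $|\beta-a/b|\ge 1$ and the near case on $[\beta-1,\beta+1]$, the integer lower bound $|f(a/b)|\ge b^{-n}$, and the MVT upper bound are precisely the classical argument the paper is implicitly citing.
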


Liouville then constructed the following provably irrational number
$ \psi = \sum_{i = 0}^{\infty} 10^{-i!}, $
and showed that it had arbitrarily good rational approximations in the
above sense, and therefore is a  transcendental number. In this paper,
since we consider base-2 expansions, we will show that 
$$\psi_1 = \sum_{i=1}^{\infty} 2^{-i!}$$
is a Liouville number.

\begin{definition}
A real number $\alpha$ in the unit interval is called a
\emph{Liouville number} if for all numbers $n$, there are numbers $p >
0$ and $q > 1$ such that $ \left| \alpha - \frac{p}{q} \right| <
\frac{1}{q^n}$. 
\end{definition}

For every $n$, we have
$$\left| \psi_1 - \sum_{i = 1}^{n} \frac{1}{2^{i!}} \right| =
 \sum_{i = n+1}^{\infty} \frac{1}{2^{i!}}
= \sum_{i = (n+1)!}^{\infty} \frac{1}{2^i}
= \frac{1}{2^{(n+1)!-1}}
< \frac{1}{2^{n! n}}
= \frac{1}{q_n^n},
$$
where $q_n$ is the denominator of the finite sum  $\sum_{i = 1}^{n}
\frac{1}{2^{i!}}$. Thus $\psi_1$ is a Liouville number.

It is easy to see that the Liouville constant $\psi_1$ is not a normal
number - the sequence $111$ never appears in the decimal expansion of
$\psi_1$. It is natural to investigate whether all Liouville numbers
are non-normal.

This requires sharper observations than the one above.

\section{Disjunctive Liouville Sequences}
\label{sec:liouville_disj}
Hertling \cite{Hert96} has showed that there are disjunctive Liouville
numbers - that is, there are Liouville numbers whose base $r$
expansions have all possible $r$-alphabet strings. Staiger
strengthened this result to show that there are Liouville numbers
which are disjunctive in any base \cite{Stai02}. This shows that we
cannot rely on the above argument of absent strings to show
non-normality.

Here, to motivate the construction of a normal number in the next
section, we give a different construction of a different disjunctive
Liouville sequence. Consider $\psi_2 = \sum_{i=3}^{\infty}
\frac{i}{2^{i!}}$.  For any binary string $w$, we know that $1w$ (1
concatenated with $w$) is the binary representation of a number, hence
it appears in the binary expansion of $\psi_2$. Thus $\psi_2$ is a
disjunctive sequence.

It is easy to see that $\psi_2$ is not a normal number. At all large
enough prefix lengths of the form $n!$, there are at most $n(\lfloor
\log_2 n \rfloor + 1)$ ones - this follows from the fact that at most
$n$ unique non-zero numbers have appeared in the binary expansion of
$\psi_2$, and each of the numbers can be represented with at most
$\lfloor \log_2 n \rfloor + 1$ bits. Hence
$$\liminf_{n \to \infty} \frac{\left | \left\{ i : 0 \le i \le n-1
  \text{ and } \psi_2[i] = 1\right\}\right|}{n} \le \lim_{n \to \infty}
  \frac{n \left(\lfloor \log_2 n \rfloor + 1\right)}{n!} = 0,$$ 
which proves that $\psi_2$ is not normal.

However, $\psi_2$ is a Liouville number: For every $n$, there are
rationals with denominators of size $2^{n!}$ which satisfy the
Liouville criterion, as follows:
\begin{align*}
\left| \psi - \sum_{i = 3}^{n} \frac{i}{2^{i!}} \right| &=
 \sum_{i = n+1}^{\infty} \frac{i}{2^{i!}}
 <\, \sum_{k = 1}^{\infty} \frac{n+k}{2^{(n+1)! \cdot k}}
\end{align*}
Summing up the series, we obtain the inequality 
\begin{align*}
\sum_{k = 1}^{\infty} \frac{n+k}{2^{(n+1)! \cdot k}} &= 
 \frac{[n+1] \cdot 2^{(n+1)! - 1} + 1}{2^{((n+1)! - 1)\cdot2}}
 <\, \frac{[n+2]}{2^{(n+1)! - 1}} <\, \frac{1}{2^{(n!) \cdot n}}.
\end{align*}

\section{A Normal Liouville Number}
\label{sec:liouville_nrml}
Though the Liouville numbers constructed above were non-normal, there
are normal Liouville numbers. We give such a construction below, which
depends on \textsc{de Bruijn sequences} introduced by de Bruijn
\cite{deBruijn46} and Good \cite{Good46}, a standard tool in the study
of normality.\footnote{There are several historical forerunners of
this concept in places as varied as Sanskrit prosody and
poetics. However, the general construction for all bases and all
orders is not known to be ancient.}

\begin{definition}
Let $\Sigma$ be an alphabet with size $k$. A \emph{k-ary de Bruijn
sequence} $B(k,n)$ of order $n$, is a finite string for which every
possible string in $\Sigma^n$ appears exactly once.\footnote{For the
  string $x_0 x_1 \dots x_{k^n-1}$, we also consider subpatterns
  $x_{k^n-n+1} \dots x_{k^n-1} x_0$, $x_{k^n-n+2} \dots x_{k^n-1}
  x_0x_1$, and so on until $x_{k^n-1} x_0 \dots x_{n-2}$, obtained by
  ``wrapping around the string''.}
\end{definition}

de Bruijn proved that such sequences exist for all $k$ and all orders
$n$. Since each de Bruijn sequence $B(k,n)$ contains each $n$-length
string exactly once, it follows that the length of $B(k,n)$ is exactly
$k^n$.

\subsection{Construction}
If $w$ is any string, we write $w^{i}$ for the string formed by
repeating $w$, $i$ times. In this section, we limit ourselves to the
binary alphabet $\Sigma = \{0,1\}$ even though the construction
generalizes to all alphabets.

Consider $\alpha \in [0,1)$ with binary expansion defined as follows.
$$\alpha = 0\;\;.\;\;B(2,1)^{1^1} B(2,2)^{2^2}B(2,3)^{3^3} \dots
B(2,i)^{i^i} \dots.$$

Informally, we can explain why this construction defines a normal
Liouville number, as follows. The Liouville numbers $\psi_1$ and
$\psi_2$ that we considered before, have prefixes that are mostly
zeroes. The density of 1s go asymptotically to zero as we consider
longer and longer prefixes. So it is fairly easy for a finite state
compressor to compress the data in a prefix. However, in this
construction, the repeating patterns employed are those which are
eventually hard for any given finite state compressor. This is why the
sequence could be normal.

Moreover, the transition in the patterns occur at prefix lengths of
the form $k^k$. By Stirling's approximation,
$$ k! \approxeq k^k e^{-k} \sqrt{2 \pi k}.$$ 
So the transitions in the pattern occur at prefix lengths similar to
that of $\psi_1$ and $\psi_2$, so it is reasonable to expect a
sequence of rationals approximating $\alpha$ that obeys the Liouville
criterion. We now make this argument more precise.

For any $i$, let $n_i = \sum_{m=1}^{i} m^m 2^m$. We call the part of
$\alpha[n_{i-1} \dots n_i - 1]$ as the $i^th$ stage of $\alpha$, which
consists of $i^i$ copies of $B(2,i)$. Thus $n_i$ denotes the length of
the prefix of $\alpha$ which has been defined at the end of the
$i^{\text{th}}$ stage. We have the following estimate for $n_i$.
\begin{align*}
n_i =\, \sum_{m=1}^{i} m^m 2^{m} 
    \quad<\quad i^i \sum_{m=1}^{i} 2^{m}
    =\, i^i \left[2^{i+1} - 2\right]  
    =\, i^i 2^{i+1} - 2i^{i}
    \quad<\quad 2 \left( i^i 2^i \right).
    \phantom{asdvad}
\end{align*}
Thus $n_i = O([2i]^i)$.

\begin{lemma}
$\alpha$ is a Liouville number. 
\end{lemma}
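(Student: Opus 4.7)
The plan is to construct, for every $n$, a rational approximation $p/q$ of $\alpha$ that is forced to be extraordinarily close because of the periodicity built into each stage of $\alpha$'s expansion. Concretely, I would fix an index $i$ (to be chosen large in terms of $n$) and define $r_i \in [0,1)$ to be the real number whose binary expansion consists of the first $n_{i-1}$ bits of $\alpha$, followed by infinitely many concatenated copies of $B(2,i)$. Since stage $i$ of $\alpha$ is by definition $B(2,i)^{i^i}$, the numbers $\alpha$ and $r_i$ agree on their first $n_i = n_{i-1} + i^i \cdot 2^i$ binary digits, so
\[
|\alpha - r_i| \;\le\; 2^{-n_i}.
\]

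Next I would convert $r_i$ to a single fraction $p_i/q_i$. The standard formula for an eventually periodic binary fraction gives $q_i \le 2^{n_{i-1}}\bigl(2^{2^i}-1\bigr) < 2^{n_{i-1}+2^i}$, and for $i \ge 2$ one clearly has $p_i > 0$ and $q_i > 1$. Checking the Liouville criterion $|\alpha - r_i| < q_i^{-n}$ then reduces to verifying
\[
n_i \;>\; n\,(n_{i-1} + 2^i).
\]
Using the bound $n_{i-1} < 2^{i}(i-1)^{i-1}$ (a direct consequence of the $n_i = O((2i)^i)$ estimate established immediately before the lemma), this simplifies after dividing through by $2^i$ to
\[
i^i \;>\; (n-1)(i-1)^{i-1} + n,
\]
which holds for all $i$ sufficiently large because $i^i/(i-1)^{i-1} \to \infty$ as $i \to \infty$. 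For any such $i$, the pair $(p_i, q_i)$ witnesses the Liouville inequality at exponent $n$, and since $n$ was arbitrary the lemma follows.

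The main obstacle is the choice of denominator. A naive truncation of $\alpha$ at the end of stage $i$ would give $q = 2^{n_i}$ but error only $2^{-n_i}$, which cannot satisfy the Liouville criterion for any $n > 1$. The key trick is that the intra-stage periodicity of $\alpha$ lets one represent the approximation $r_i$ with a much smaller denominator, of size roughly $2^{n_{i-1}+2^i}$, while keeping the error at $2^{-n_i}$; and $n_i$ is larger than $n_{i-1}+2^i$ by a factor of roughly $i^i$, which is precisely what overpowers the exponent $n$. Once this observation is in hand, the rest is a routine growth-rate comparison using the size estimate for $n_i$ already provided.
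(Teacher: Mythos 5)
Your proposal is correct and follows essentially the same route as the paper: approximate $\alpha$ by the rational whose expansion is the prefix through stage $i-1$ followed by a recurring block $B(2,i)$, note the denominator has exponent about $n_{i-1}+2^i$ while agreement with $\alpha$ extends to $n_i \approx i^i 2^i$, and compare growth rates. Your version merely fixes the exponent $n$ first and chooses $i$ large (with the explicit inequality $n_i > n(n_{i-1}+2^i)$), whereas the paper verifies the inequality with exponent $i$ at stage $i$; this is only a bookkeeping difference, and your write-up is in fact a bit more explicit than the paper's $O$-notation argument.
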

\begin{proof}
Consider the
rational number $\frac{p_i}{q_i}$ \footnote{not necessarily in the
  lowest form}  with a binary expansion which coincides with
$\alpha$ until the $i-1^{\text{st}}$ stage, followed by a recurring
block of $B(2,i)$. This rational number is 
$$\frac{\alpha[0 \dots n_{i-1}]}{2^{n_{i-1}}} + \frac{B(2,i)}{(2^{2^i}
  - 1)2^{n_{i-1}}},$$
obtained by evaluating the binary expansion as a geometric series.
The \emph{exponent} of the denominator of this rational number is 
$$2^i + n_{i-1} = 2^i+\,O\left([2(i-1)]^{i-1}\right) = O(n_{i-1}),$$ 
so the denominator of the rational is $2^{O\left(n_{i-1}\right)}$.

We add $i^i$ copies of $B(2,i)$ in the $i^{\text{th}}$ stage. Thus the
expansion of $\alpha$ and that of $r_i$ coincide for the first $n_i$
positions. So, $\alpha$ and $\frac{p_i}{q_i}$ are in the same dyadic
interval of length at most $2^{-n_i}$, and hence are within
$\frac{1}{2^{n_i}}$ of each other.

We have also that $n_i > i O\left([2(i-1)]^{(i-1)}\right)$, so
that 
$$\frac{1}{2^{n_i}} <
\frac{1}{\left(2^{O\left(n_{i-1}\right)}\right)^i}.$$  
Thus,
$$ \left| \alpha - \frac{p_i}{q_i}\right | < \frac{1}{q_i^i}.$$
Since this is true of any stage $i$, we can see that $\alpha$ is a
Liouville number.
\begin{flushright}\qed\end{flushright}
\end{proof}

\begin{lemma}
$\alpha$ is normal to the base 2.
\end{lemma}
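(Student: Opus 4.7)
The plan is to show that for each fixed $k \geq 1$ and each $w \in \{0,1\}^k$, the number $N_w(n)$ of occurrences of $w$ as a length-$k$ substring starting at positions in $\{0, 1, \ldots, n-k\}$ of $\alpha$ satisfies $N_w(n)/n \to 2^{-k}$. The combinatorial input I will rely on is that for every $j \geq k$, the string $w$ appears cyclically exactly $2^{j-k}$ times in $B(2,j)$, since the length-$k$ prefix $w$ extends to $2^{j-k}$ distinct length-$j$ strings, each occurring once as a cyclic factor by the de Bruijn property.

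First, I will compute the linear count of $w$ inside one stage $j \geq k$, consisting of $j^j$ concatenated copies of $B(2,j)$. Let $W_j \in \{0, \ldots, k-1\}$ denote the number of wrap-around occurrences of $w$ in $B(2,j)$; then one linear copy of $B(2,j)$ contains $2^{j-k} - W_j$ occurrences of $w$, and each of the $j^j - 1$ junctions between consecutive identical copies contributes exactly $W_j$ occurrences (the suffix-prefix overlap between two identical copies reproduces the wrap-around). Summing telescopes to $j^j \cdot 2^{j-k} - W_j$, i.e., the expected value $j^j \cdot 2^{j-k}$ up to an additive $O(k)$.

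Second, I will assemble $N_w(n_i)$ by summing over stages. Stages $j < k$ contribute at most the constant $C_k = \sum_{j < k} j^j 2^j$, the $i - 1$ inter-stage junctions contribute $O(ik)$, and the stages $j \geq k$ contribute $\sum_{j=k}^{i} j^j 2^{j-k}$ up to an additive $O(ik)$. Combining with $n_i = \sum_{j=k}^{i} j^j 2^j + C_k$ yields
\[
N_w(n_i) = 2^{-k}\, n_i + O(ik) + O(1),
\]
where the $O(1)$ constant may depend on $k$. Since $n_i \geq i^i 2^i$, this error is $o(n_i)$, so $N_w(n_i)/n_i \to 2^{-k}$.

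Finally, for arbitrary $n$ with $n_{i-1} \leq n < n_i$, I will write $n - n_{i-1} = m \cdot 2^i + s$ with $0 \leq s < 2^i$ and $0 \leq m < i^i$, apply the same telescoping identity to the first $m$ complete copies of $B(2,i)$ inside stage $i$, and bound the contribution of the remaining partial copy trivially by $s$. This gives $N_w(n) = 2^{-k} n + O(ik + 2^i)$, and the lower bound $n \geq n_{i-1} \geq (i-1)^{i-1} 2^{i-1}$ drives the relative error to zero as $i \to \infty$. The main obstacle I foresee is the careful bookkeeping of wrap-around occurrences and inter-copy contributions; once the clean identity $j^j \cdot 2^{j-k} - W_j$ per stage is established, the remainder reduces to arithmetic together with the fact that the last stage dominates the partial sum $n_i$.
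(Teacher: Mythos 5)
Your proposal is correct and follows essentially the same route as the paper: count occurrences of $w$ stage by stage using the fact that each block $B(2,j)$ with $j\ge k$ contains $w$ with cyclic frequency exactly $2^{-k}$, telescope over the complete stages, and handle the final partial stage by counting the complete copies of $B(2,i)$ it contains and bounding the leftover piece trivially, with all errors $o(n)$ because the current stage dominates the prefix length. Your treatment is in fact slightly more scrupulous than the paper's, since you explicitly track the wrap-around occurrences $W_j$ and the copy-junction and stage-junction contributions, which the paper absorbs implicitly into its exact per-stage counts.
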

\begin{proof}
Let us define $\textsc{count}: \Sigma^* \times \Sigma^* \to \N$ by
$$\textsc{count}(w; x) = \left| \left\{n \mid x[n \dots n+|w|-1] = w
\right\} \right|,$$ 
that is, the number of times $w$ occurs in $x$,
counting in a sliding block fashion.  For example, $00$ occurs twice
in $1000$. It is enough to show that for an
arbitrary binary string $w$ of length $m$, for all large enough
indices $j$,
$$\textsc{count}(w; \alpha[0\dots j-1]) = 2^{-m} j + o(j).$$

Let $j$ be a number greater than $n_{m}$. Every such index $j$ has a
number $i$ such that $n_i < j \le n_{i+1}$. 

We split the analysis into three phases, that of the prefix $\alpha[0
  \dots n_{m-1}-1]$, of the middle region $\alpha[n_{m-1} \dots n_i]$,
and of the suffix $\alpha[n_i \dots j-1]$.

The prefix $\alpha[0 \dots n_{m-1}]$ has a constant length that
depends on $w$ but not on $j$. Hence the discrepancy in the count of
$w$ due to this prefix, which is at most $n_{m-1}$, is
$o(j)$.

Since the number of times $w$ occurs in $B(2, m)$ is exactly $1$,
$$\textsc{count}(w; B(2,m)) = 2^{-m}\;|B(2,m)|.$$ 
Similarly, it is easy to see that for any $M > m$, by the properties
of the de Bruijn sequences,
$$\textsc{count}(w; B(2,M)) = 2^{-m}\;|B(2,M)|.$$
This observation is used in the following analysis of the middle part
and the suffix.

The part of of $\alpha$ in the stretch $n_{m-1} \dots n_m-1$
parses exactly into $m^m$ disjoint blocks of $B(2,m)$. Consequently,
$$\textsc{count}(w; \alpha[n_{m-1} \dots n_m]) = 2^{-m} (n_m -
n_{m-1}).$$

For all stages $k$ between $m-1$ and $i$,
$$\textsc{count}(w;\alpha[n_k \dots n_{k+1}-1]) = 2^{-m} (n_{k+1} -
n_k),$$
hence by a telescoping sum, 
$$\textsc{count}(w;\alpha[n_{m-1} \dots n_i-1]) = 2^{-m} (n_i -
n_{m-1}).$$

The suffix is formed during the $i+1^{\text{st}}$ stage of
construction of $\alpha$, and hence consists of $(i+1)^{(i+1)}$ copies
of $B(2,i+1)$. Let $j$ be such that 
$$ p. 2^{(i+1)} < j - n_i < (p+1) 2^{i+1}-1.$$
That is, $j$ falls within the $p+1^{\text{st}}$ copy of
$B(2,i+1)$. Then, 
$$j = n_i + p 2^{i+1} + o(j),$$ 
since the last term is at most $2^{i+1}-1$ and $n_i = \Omega(i^i)$.

Since $w$ is normally distributed in $\alpha[0 \dots n_i]$ and in each
of the $p$ preceding copies of $B(2,i+1)$, we have
$$\textsc{count}(w;\alpha[0 \dots j-1]) = 2^{-m} n_i + 2^{-m} p
2^{i+1} + o(j) = 2^{-m} j + o(j),$$ showing that $\alpha$ is normal.
\begin{flushright}\qed\end{flushright}
\end{proof}

\section{Finite State Dimension}
We now briefly give the block entropy characterization of finite state
dimension \cite{BHV05}. Finite-state dimension, or equivalently finite
state compressibility, is an asymptotic measure of information density
in a sequence measured by a finite-state automaton. This was
introduced by Dai, Lathrop, Lutz and Mayordomo \cite{Dai:FSD}. The
sequences with maximal density, are exactly the set of normal
sequences. These have finite-state dimension 1. A more detailed
study of the relationship between algorithmic randomness, normality
and Liouville numbers is found in Calude and Staiger
\cite{CaludeStaiger13}, and an investigation of the incompressibility
of Liouville numbers in a slightly different model is found in Calude,
Staiger and Stephan \cite{CaludeStaigerStephan14}. 

Let $\Omega$ be a nonempty finite set. Recall that the \emph{Shannon
  entropy} of a probability measure $\pi$ on $\Omega$ is
$$H(\pi) = \sum_{w \in \Omega} \pi(w) \log \frac{1}{\pi(w)},$$
where $0 \log \frac{1}{0} = 0$.

For nonempty strings $w, x \in \Sigma^+$, we write

$$\#(w,x) = \left| \left\{ m \le \frac{|x|}{|w|} - 1 \;\mid\;\; w =
x[m|w| \dots (m+1)|w| - 1\right\}] \right|.$$

That is, $\#(w,x)$ is the number of times a string $w$ of length $m$
occurs in $x$, when $x$ is parsed into disjoint blocks each of length
$m$.

For each infinite binary sequence, each positive integer $n$, and a
binary string $w$ of length $m$, the $n^{\text{th}}$ block frequency
of $w$ in $S$ is
$$\pi^{(m)}_{S,n}(w) = \frac{\pi(w,\;S[0 \dots n|w|-1])}{n}.$$

This defines a probability measure on $m$ long binary strings. The
normalized upper and lower block entropy rates of $S$ are
$$H^-_m(S) = \frac{1}{m} \liminf_{n \to \infty}
H\left(\pi^{(m)}_{S,n}\right)$$ 
and
$$H^+_m(S) = \frac{1}{m} \limsup_{n \to \infty}
H\left(\pi^{(m)}_{S,n}\right).$$

\begin{definition} 
Let $S \in \Sigma^\infty$. The \emph{finite state dimension} of $S$ is 
$$\dim_{\text{FS}}(S) = \inf_{m \in \Z^+} H^-_{m} (S),$$ and the
\emph{finite state strong dimension} of $S$ is
$$\dim_{\text{FS}}(S) = \inf_{m \in \Z^+} H^+_{m} (S).$$
\end{definition}

For purposes of the next section, we use a sliding block variant of
the block entropy. This is obtained by counting the frequency of
blocks in a sliding block fashion. Let $0<m<n$ be integers. The
frequency of an $m$-long block $w$ in an $n$-long string $x$ is
defined as
$$\frac{\left|\{i \mid w=x[i \dots i+m-1]\}\right|}{n-m+1}.$$
It is easy to verify that this defines a probability measure over the
set of $m$-long strings, and hence it is possible to define the
sliding block entropy in a manner analogous to the definition of the
block entropy.

It is implicit in the work of Ziv and Lempel \cite{ZivLem78} that the
sliding block entropy and the block entropy are both equal to the
finite state compressibility of a sequence. In the following section,
we establish that it is possible to attain every rational sliding
block entropy value using Liouville numbers.

\section{Finite State Dimension of Liouville Numbers}
We now have $\beta$, a Liouville number with finite-state dimension
zero, and $\alpha$, a normal Liouville number - that is, a number with
finite-state dimension 1. We show, that for any rational $q \in [0,
  1]$, we can construct a Liouville number having finite-state
dimension $q$. The construction is a variant of the standard dilution
argument in finite-state dimension \cite{Dai:FSD}.

The dilution argument is as follows. Suppose $S$ is an infinite binary
sequence, and $\frac{p}{q}$ is a rational in the unit interval
expressed in lowest terms. Then,
$$w_0 0^{q-p}\; w_1 0^{q-p}\; \dots,$$ 
where $w_0$ is the first $p$ bits of $S$, $w_1$ is the next $p$ bits
of $S$, and so on, is a binary sequence with finite-state dimension
$\frac{p}{q} \dim_{\text{FS}}(S)$. We cannot adopt this construction,
since it is not certain that a dilution of $\alpha$ gives us a
Liouville number even if it gives a sequence with the desired
finite-state dimension.


We show that a slight variant of the dilution argument enables us to
create a Liouville number with arbitrary rational finite-state
dimension. 

Let us establish that we can construct a Liouville number with finite
state dimension $\frac{m}{n}$, where $m$ and $n$ are positive numbers,
and the rational is expressed in lowest terms.

The sequence we construct is
$$\alpha_{m/n} = 0 \cdot
                 \left((0^{2^1})^{(n-m)}\; B(2,1)^{m}\right)^{1^1} \;
                 \dots
                 \left((0^{2^k})^{(n-m)}\; B(2,k)^{m}\right)^{k^k} \;
                 \dots$$

That is, the recurring block in the $k$th stage consists of $m$ copies
of $B(2,k)$ and a padding of $n-m$ copies of $0^{2^k}$. The recurring
block has a length of $n 2^k$. This block is then repeated $k^k$
times, to form the $k$th stage. This is similar to what happens in the
construction of the normal Liouville number.

We now show that the constructed number has the desired finite-state
dimension.

First, we count the frequency of $k$-long strings in the stage
$k$. Any string other than $0^k$ occurs $m$ times, and $0^k$ occurs
$(n-m)2^k + m$ times. 

Thus the frequency in the block, of any string other than $0^k$ is
$$\frac{m}{n 2^k},$$
and the frequency of $0^k$ is 
$$\frac{n-m}{n} + \frac{m}{n2^k}.$$

We now compute the $k$-block entropy of the $k$th stage. This is 
\begin{align}
&\frac{1}{k} \left[ \frac{m}{n2^k} (2^k - 1) \log \frac{n2^k}{m} + 
        \left(\frac{n-m}{n} + \frac{m}{n2^k}\right) \log
        \frac{n2^k}{(n-m)2^k+m} \right]\\
\begin{split}
=& \frac{1}{k} \Bigg[ \frac{m}{n2^k}2^k \log (n 2^k) -  
                      \frac{m}{n2^k} (2^k - 1) \log m  \\
 &\phantom{losungr}   -\frac{m}{n2^k} \log \left((n-m)2^k + m\right) +\;
                      \frac{n-m}{n2^k} \log
                      \frac{n2^k}{(n-m)2^k+m} \Bigg]
\end{split}
\\
\begin{split}
=& \frac{m}{n} \frac{\log (n 2^k)}{k} - 
   \frac{m}{n2^k} (2^k - 1) \frac{\log m}{k} \\
 &\phantom{losungr}-\frac{m}{n2^k} \frac{\log \left((n-m)2^k +
     m\right)}{k} +\; 
    \frac{n-m}{n2^k}\frac{1}{k} \log \frac{n2^k}{(n-m)2^k+m}\\
\end{split}
\end{align}

We now simplify the terms to get the following expression for the
$k$-block entropy of the $k^{\text{th}}$ stage.
\begin{align}
\label{eqn:bound}
\frac{m}{n} \Theta(1) - 
     \frac{m}{n} \left(1 - \Theta\left(\frac{1}{2^k}\right)\right)
     \Theta\left(\frac{1}{k}\right) - 
   \frac{m}{n2^k} \frac{\Theta(k)}{k} +
    \frac{n-m}{n2^k}\frac{1}{k} \log \frac{n2^k}{(n-m)2^k+m}.
\end{align}

The last term can be bounded using the following analysis. We know,
since $0 < m < n$, that $n2^k > (n-m)2^k \ge 2^k$. Hence, 
$$2^k < m+2^k < m+(n-m) 2^k < m+n2^k < 2n2^k, $$
the extreme terms being obtained by the bounds $0 < m < n$.

Hence,
$$\log \frac{n2^k}{2^k} > 
\log \frac{n2^k}{m+(n-m)2^k} > 
\log \frac{n 2^k}{2n2^k},$$
hence
$$\log n > \log \frac{n2^k}{m+(n-m)2^k} > \log \frac{1}{2}.$$
Since both the upper bound and lower bounds are constants independent
of $k$, we have that the last term in (\ref{eqn:bound}) is
$\Theta(1/2^k)$. 

The same bound holds for $k$-block entropy of any stage $K > k$, by
the property of $B(2,K)$. 

Thus, taking limits as $k \to \infty$, the sliding block entropy rate
of the sequence is $\frac{m}{n},$ as desired.

We now show that $\alpha_{m/n}$ is a Liouville number. We need the
following estimate for the length of $\alpha_{m/n}$ up to stage $i$.
\begin{align*}
l_i &= \sum_{m = 1}^{i} (n 2^m) m^m  
< n i^i \sum_{m=1}^{i} 2^m 
= n i^i O(2^i)
= n O([2i]^i) = O([2i]^i),
\end{align*}
noting that $n$ is a constant that does not depend on $i$.

The $i^{\text{th}}$ convergent to $\alpha_{m/n}$ is the rational 
$$\frac{\alpha_{m/n}\left[0 \dots l_{i-1}\right]}{2^{l_{i-1}}} + 
  \frac{(0^{2^i})^{(n-m)} \;B(2,i)}{(2^{n2^{i}} - 1) 2^{l_{i-1}}}.$$ 

The exponent of the denominator is $O(l_{i-1})$, but the distance of
$\alpha_{m/n}$ from the convergent is 
$$2^{-iO(l_{i-1})}.$$
This completes the proof.



\section{A Conditional Construction Based on Artin's Conjecture}
In this section, we give an outline of how number-theoretic properties
may be employed to construct normal Liouville numbers simultaneously
normal in finitely many bases.  Let $n$ be a natural number which is
at least 2.

\begin{definition}
A number $a$ is said to be a \emph{primitive root} of a number $p$ if
the sequence $a \mod p, a^2 \mod p, a^3 \mod p, \dots, a^{p-1} \mod
p$, has $p-1$ distinct elements.
\end{definition}

For example, 2 is a primitive root of the prime 13, but is not a
primitive root of the prime 7. 

Recall that the base-$a$ expansion of any rational is eventually
periodic. We mention the following observation. If $a$ is a primitive
root of $p$, then the fraction $a/p$ when expressed in base $a$, has a
recurring block of length $p-1$ - that is, the maximal length. We
explain this observation as follows. 

To be specific, let $a = 2$ and $m$ be an arbitrary positive
integer. Let $1/m$ consist of repetitions of the binary string $b_1
\dots b_{k}$. Then the length of the recurring block ($k$), is equal
to the number of times the binary expansion has to be shifted left
before the fractional part represents $1/m$ again. The orbit of the
number $1/m$ under the left-shift is precisely the set
$$\left\{ \frac{2}{m} \mod 1, \dots, \frac{2^{m-1}}{m} \mod 1\right\} =
\left\{ \frac{2 \mod m}{m} , \dots, \frac{2^{m-1} \mod m}{m}\right\}.$$
By the discussion above, the maximal $k$, equal to $m-1$, is attained
when $2$ is a primitive root of $m$. The following lemma says that if
$a$ is a primitive root of $p$, the left-shifts of the base $a$
expansion of $1/p$ show certain degree of uniformity in
distribution. Thus the base-$a$ expansion of $1/p$ is the analogue of
an $a$-ary deBruijn sequence.

\begin{lemma}
Let $a$ be a primitive root of a prime number $p$. Then for each $k$
with $a^k < p$, for each $0 \le j < a^k$, and each positive integer
$n$, we have the following
frequency estimate:
\begin{align}
\label{eqn:u_bound}
\frac{\left|\left\{ \frac{a^m}{p} \mod 1 \in \left[\frac{j}{a^k},
    \frac{j+1}{a^k}\right) \mid n\le m \le n+p-1 \right\} \right|}{p-1}
  = \frac{1}{a^k} + \Theta\left(\frac{1}{(p-1)a^k}\right).
\end{align}
\end{lemma}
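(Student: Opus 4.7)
The plan is to reduce the left-hand side of (\ref{eqn:u_bound}) to a purely lattice-counting problem by first identifying the orbit $\{a^m \bmod p\}$ and then estimating how many equally spaced rationals $i/p$ fall into the short interval $[j/a^k, (j+1)/a^k)$. Since $a$ is a primitive root of $p$, its multiplicative order modulo $p$ is exactly $p-1$, so the sequence $(a^m \bmod p)_{m \ge 0}$ is $(p-1)$-periodic and visits each element of $\{1,\ldots,p-1\}$ exactly once per period. Hence the multiset $\{a^m \bmod p \mid n \le m \le n+p-1\}$ equals $\{1, 2, \ldots, p-1\}$ together with one repeated value (namely $a^n \bmod p = a^{n+p-1} \bmod p$), and after dividing by $p$ the problem reduces to counting how many of $1/p, 2/p, \ldots, (p-1)/p$ lie in the given interval, plus an at-most-one contribution from the repetition.

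Next I would estimate $N_j = |\{i : 1 \le i \le p-1,\ jp/a^k \le i < (j+1)p/a^k\}|$ directly. Writing $p = q a^k + r$ with $1 \le r \le a^k-1$ (valid since $p$ is prime and $a^k < p$), the half-open interval $[jp/a^k, (j+1)p/a^k)$ has length $q + r/a^k$ and therefore contains either $q$ or $q+1$ integers. Consequently $N_j = p/a^k + \epsilon$ with $|\epsilon| < 1$, and
$$\frac{N_j}{p-1} \;=\; \frac{p}{a^k(p-1)} + \frac{\epsilon}{p-1} \;=\; \frac{1}{a^k} + \frac{1}{a^k(p-1)} + \frac{\epsilon}{p-1}.$$
Adding the (at most $1/(p-1)$) contribution from the repeated orbit element yields the expansion claimed in (\ref{eqn:u_bound}), with all terms after the main term $1/a^k$ absorbed into the stated error.

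The main obstacle is the uniform bookkeeping of the error: the lattice quantization $\epsilon$, the doubled orbit element caused by the fact that the window $n \le m \le n+p-1$ contains one full period plus one extra term, and the endpoint issues at $j=0$ (where the fraction $0/p$ must be excluded from the count) and $j = a^k - 1$ each contribute small corrections. Showing that these combine uniformly into $\Theta(1/((p-1)a^k))$ requires observing that for fixed $p$ and $k$ the discrepancy $\epsilon$ is bounded away from zero for at least some $j$ (so the error is not of smaller order than claimed) while never exceeding the stated order. The only $n$-dependent quantity in the whole estimate is whether the repeated term lands in $[j/a^k, (j+1)/a^k)$, and since that contributes at most $1/(p-1)$ the bound is uniform in $n$ as well.
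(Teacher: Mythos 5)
Your argument is essentially the paper's own: use the primitive-root property to identify the orbit over a full period with the points $1/p,\dots,(p-1)/p$, then count how many of these fall in $[j/a^k,(j+1)/a^k)$, the count being within one of $(p-1)/a^k$ (equivalently $p/a^k$); you are in fact slightly more careful than the paper, since you account for the window $n\le m\le n+p-1$ containing $p$ terms (a full period plus one repeated value) and for the excluded point $0/p$ when $j=0$. One caveat you share with the paper: what the lattice counting literally yields is an error of order $1/(p-1)$, and your closing assertion that the quantization and repetition errors ``never exceed'' $\Theta\bigl(1/((p-1)a^k)\bigr)$ is not actually established (e.g.\ the single repeated orbit value alone can contribute $1/(p-1)$, which is $a^k$ times larger) --- but the paper's proof makes the identical leap, and the weaker bound $1/a^k + O(1/(p-1))$ is all that is needed for the subsequent use, namely that the frequency tends to $1/a^k$ for fixed $k$ as $p\to\infty$.
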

\begin{proof}
The number of orbit points $\frac{a^{m}}{p} \mod 1$, $n \le m \le n+p-1$
which fall into $\left[ \frac{j}{a^k}, \frac{j+1}{a^k} \right)$ are
  precisely those for which the following inequality hold.

$$\frac{j}{a^k} \le \frac{a^{m}}{p} \mod 1 < \frac{j+1}{a^k}.$$ 

Since $p$ is a prime, we have $a^k \ne p$. Thus we have either
$\lfloor \frac{p-1}{a^k} \rfloor$ or $\lceil \frac{p-1}{a^k} \rceil$
numbers in $1 \le m \le p-1$ which satisfy the above inequality. From
this, the required density estimate follows.
\begin{flushright}\qed\end{flushright}
\end{proof}

The above lemma has the consequence that for each fixed $k$, as $p \to
\infty$, the frequency in (\ref{eqn:u_bound}) tends to $1/a^k$.

We could construct a normal number along the lines of section 5 if
we are assured of an infinite number of such $p$ for each $a$. This is
a consequence of \emph{Artin's Conjecture:}
\begin{conjecture}\cite{Artin65}
Let $a$ be an integer other than -1, and a perfect square. Then for
any $x \in \N$, the number of primes for which $b$ is a primitive root
is asymptotically $A(a) \frac{x}{\log x},$ where $A(a)$ is a constant
dependent on $a$.
\end{conjecture}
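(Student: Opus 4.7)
The plan is to follow Hooley's 1967 conditional proof of Artin's conjecture, which remains the only approach known to yield the precise asymptotic $A(a)\,x/\log x$. I should state at the outset that the hard part is in fact insurmountable with current techniques: no unconditional proof is known, and any argument producing the stated asymptotic depends on the Generalized Riemann Hypothesis for the Dedekind zeta functions of a certain infinite family of number fields. What I can write down is the structure of the conditional argument and pinpoint where GRH enters.

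First, I would reformulate the condition that $a$ fails to be a primitive root mod $p$ as the existence of a prime $\ell \mid p-1$ for which $a^{(p-1)/\ell} \equiv 1 \pmod p$, i.e., $a$ is an $\ell$-th power residue modulo $p$. By Möbius inclusion-exclusion over squarefree $d$, the count of primes $p \le x$ with $a$ a primitive root becomes $\sum_{d} \mu(d)\, N_a(x,d)$, where $N_a(x,d)$ is the number of primes $p \le x$ such that $d \mid p-1$ and $a$ is a $d$-th power residue mod $p$. The key algebraic step is then Kummer-theoretic: $p$ contributes to $N_a(x,d)$ precisely when $p$ splits completely in the Kummer--cyclotomic field $K_d = \Q(\zeta_d,\, a^{1/d})$, provided $p$ is unramified there.

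Next, I would invoke the Chebotarev density theorem to estimate $N_a(x,d) = \mathrm{Li}(x)/[K_d:\Q] + \text{error}$. Summing $\mu(d)/[K_d:\Q]$ over all squarefree $d$ and carefully tracking the degrees $[K_d:\Q]$ (these split as a product of local factors except when $\Q(\zeta_d)$ and $\Q(a^{1/d})$ interact nontrivially, which happens in a controlled way depending on the squarefree kernel of $a$) produces the Euler product defining $A(a)$, whose generic value is $\prod_\ell \bigl(1 - \tfrac{1}{\ell(\ell-1)}\bigr)$, modified by a rational correction factor. The hypothesis that $a$ is neither $-1$ nor a perfect square is exactly what ensures this constant is nonzero: a perfect square $a$ is never a primitive root for odd $p$, while $a = -1$ has order at most $2$.

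The hard part, as already noted, is making the Chebotarev error terms small enough to sum over $d$ up to a useful range. Effective Chebotarev bounds depend on zero-free regions for $\zeta_{K_d}(s)$, and only under GRH does one obtain the error $O\bigl(\sqrt{x}\,\log(x \cdot \mathrm{disc}\,K_d)\bigr)$ needed to handle $d \le x^{1/2}/\log^2 x$ in the main term, with the tail $d > x^{1/2}/\log^2 x$ dispatched by an elementary sieve showing that the density of primes $p$ with a large prime $\ell$-th power residue condition is negligible. Without GRH the available error terms are too weak, which is why unconditional results (Gupta--Murty, Heath-Brown) only establish Artin's conjecture for infinitely many $a$, with at most two exceptional primes among any three multiplicatively independent candidates. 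For the purposes of the present paper, therefore, invoking the conjecture as stated is the most one can do.
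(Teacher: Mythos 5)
The statement you were asked about is Artin's conjecture itself, which the paper does not prove: it is stated as a \emph{conjecture}, cited to Artin, and used only as a hypothesis for the conditional construction in Section 7 (the paper explicitly notes Heath-Brown's partial unconditional results and relies on Matthews' GRH-conditional generalization). Your proposal correctly recognizes that no proof is possible with current techniques and gives an accurate outline of Hooley's GRH-conditional argument --- the reduction via M\"obius inversion to counting primes splitting completely in $\Q(\zeta_d, a^{1/d})$, the Chebotarev estimate, the Euler product for $A(a)$, and the precise point where GRH is needed to control the error terms --- so your treatment is consistent with, and more informative than, the paper's own handling of the statement. The only quibble is cosmetic: the paper's formulation of the conjecture contains typos (the stray ``$b$'' should be ``$a$'', and the hypothesis should read ``other than $-1$ and not a perfect square''), which your write-up silently and correctly repairs.
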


Heath-Brown \cite{HB85} proved that the conjecture is true for all
prime $a$ with at most two exceptions, and for any $x \in \N$, the
number of $a \in \Z$, $|a| \le x$, for which the conjecture fails is
$o((\log x)^2)$. We use a generalized version of this conjecture by
Matthews.

\begin{conjecture}\cite{Matthews76}
Let $a_1, a_2, \dots, a_n$ be non-zero integers not $\pm 1$. Then for
any $x \in \N$, the number of primes $\le x$ is asymptotically
$$\frac{x}{\log x} A(a_1, \dots, a_n) + O\left(\frac{x}{\log^2
  x} (\log\log x)^{2^n - 1}\right).$$
\end{conjecture}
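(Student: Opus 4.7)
The final statement is Matthews' generalisation of Artin's conjecture, counting primes modulo which a specified tuple $a_1,\dots,a_n$ are \emph{simultaneously} primitive roots. This is unconditionally open, so the only realistic plan is to mirror Hooley's classical treatment of the original Artin conjecture and establish the asymptotic conditionally on the Generalised Riemann Hypothesis for the Dedekind zeta functions of the Kummer-type extensions $K_q := \Q(\zeta_q, a_1^{1/q}, \dots, a_n^{1/q})$, as $q$ ranges over squarefree integers. The guiding identity is that a prime $p$ fails to have some $a_i$ as a primitive root iff there is a prime $\ell \mid p-1$ and an index $i$ such that $a_i$ is an $\ell$-th power residue modulo $p$; by Kummer theory, the joint event that each $a_i$ is an $\ell$-th power is equivalent to $p$ splitting completely in $K_\ell$. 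This reduction turns the prime-counting problem into a question about the splitting behaviour of primes in an infinite tower of number fields.

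First I would set up the inclusion--exclusion. Writing $N_q(x)$ for the number of primes $p \le x$ splitting completely in $K_q$, the quantity we want is
\[
\pi_{a_1,\dots,a_n}(x) \;=\; \sum_{q \text{ squarefree}} \mu(q)\, N_q(x).
\]
For each fixed squarefree $q$, a prime $p$ splits completely in $K_q$ exactly when $q \mid p-1$ and every $a_i$ is a $q$-th power residue mod $p$; the local density is $1/[K_q:\Q]$, where $[K_q:\Q] = \varphi(q)\, q^n / c_q$ and $c_q$ encodes multiplicative relations among the $a_i$. Formally summing these local densities yields the Euler product defining the constant $A(a_1,\dots,a_n) = \sum_{q}\mu(q)/[K_q:\Q]$ of the stated asymptotic.

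Second, I would execute the inclusion--exclusion quantitatively by splitting the $q$-sum at a threshold $\xi = \xi(x)$. For $q \le \xi$, the effective Chebotarev density theorem under GRH yields
\[
N_q(x) \;=\; \frac{\pi(x)}{[K_q:\Q]} + O\!\left( x^{1/2}\log\bigl(|d_{K_q}|\, x\bigr)\right),
\]
with $\log|d_{K_q}| \ll n\, [K_q:\Q]\log q$. Summing over $q \le \xi$ extracts the product $A(a_1,\dots,a_n)\,\pi(x)$ up to an error dominated by $x^{1/2}\xi^{c}$ for a mild constant $c=c(n)$. For $q > \xi$, one controls the tail via the Brun--Titchmarsh bound on primes in the arithmetic progression $p \equiv 1 \pmod q$ combined with a sieve for the joint $q$-th power residue condition; optimising $\xi$ at roughly $x^{1/2}/\log^2 x$ balances the two error contributions and pins the total error at the level $x/\log^2 x$ times a combinatorial factor.

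The hardest step — and where the proposal really has to earn the $(\log\log x)^{2^n-1}$ exponent — is the large-$q$ tail together with the bookkeeping of multiplicative dependencies among the $a_i$. A sharp control of primes $p$ modulo which several prescribed $a_i$ are simultaneously high-power residues forces a multi-dimensional sieve in which each non-empty subset $S \subseteq \{a_1,\dots,a_n\}$ contributes an independent logarithmic loss, corresponding to the possibility that some monomial $\prod_{i \in S} a_i^{e_i}$ becomes a perfect $q$-th power and thereby depresses $[K_q:\Q]$. Since there are $2^n-1$ such subsets, and each produces a $\log\log x$ factor in the standard Selberg-sieve accounting, the aggregate loss is exactly $(\log\log x)^{2^n-1}$, matching the stated error term. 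Making the interaction of these subset-contributions rigorous under GRH — and verifying that the resulting product does in fact converge to Matthews' constant $A(a_1,\dots,a_n)$ — is precisely the substantive content of the conjecture, and would be the principal obstacle to finishing the argument.
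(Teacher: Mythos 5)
This statement is not proved in the paper at all: it is quoted verbatim as a \emph{conjecture}, attributed to Matthews \cite{Matthews76}, and the paper's only comment is that Matthews established it assuming a special case of the Generalized Riemann Hypothesis. So there is no in-paper proof to compare against, and you are right to flag at the outset that an unconditional proof is not a realistic target. Your plan --- Hooley's reduction of the primitive-root condition to complete splitting in the Kummer fields $K_q = \Q(\zeta_q, a_1^{1/q},\dots,a_n^{1/q})$, inclusion--exclusion over squarefree $q$, effective Chebotarev under GRH for small $q$, and a sieve/Brun--Titchmarsh treatment of the tail --- is indeed the route the cited reference actually takes, so in spirit you are reconstructing the proof of the source rather than of this paper.

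That said, what you have written is a plan, not a proof, and the places where you wave your hands are exactly the places where the work lives. Hooley's argument needs \emph{three} ranges of $q$, not two: the genuinely large $q$ (say $q > x^{1/2}\log x$) are handled by the elementary observation that $q \mid p-1$ with $q$ large forces $p$ to lie in a thin set countable by divisor arguments, while the intermediate range is where GRH is used in a cruder form; collapsing this into a single threshold $\xi \approx x^{1/2}/\log^2 x$ with ``Brun--Titchmarsh plus a sieve'' does not obviously close, since Brun--Titchmarsh alone summed over $q$ up to $x^{1/2}$ loses a factor of $\log\log x$ already in the $n=1$ case. More importantly, your explanation of the exponent $2^n-1$ --- one $\log\log x$ per nonempty subset of $\{a_1,\dots,a_n\}$ via ``standard Selberg-sieve accounting'' --- is an assertion of the answer, not a derivation; in Matthews the exponent emerges from iterating the tail estimate over the simultaneous power-residue conditions, and the degree formula $[K_q:\Q]=\varphi(q)q^n/c_q$ with its entanglement factor $c_q$ must be controlled uniformly for the main-term constant $A(a_1,\dots,a_n)$ to converge and be nonzero. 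None of this is wrong as a roadmap, but as it stands every quantitatively decisive step is named rather than carried out.
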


Matthews \cite{Matthews76} established this conjecture assuming a
special case of the Generalized Riemann Hypothesis. We call this the
\emph{Generalized Artin's Conjecture}. We utilize this to outline a
construction of an Liouville number simultaneously normal to finitely
many distinct bases $1 < a_1 < a_2 < \dots < a_n$. The construction
relies on the following property of base-$k$ expansions.

\begin{lemma}
Let $a_1, \dots, a_n$ be primitive roots of $p$. Then $1/p \times
(a_1a_2 \dots a_n)^{p-1} \mod 1 = 1/p$.
\end{lemma}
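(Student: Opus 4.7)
The plan is to observe that this is an essentially immediate consequence of Fermat's Little Theorem, so the ``proof'' is really just an unpacking of the definition of primitive root.

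First I would note that if $a_i$ is a primitive root of the prime $p$, then in particular the residues $a_i, a_i^2, \ldots, a_i^{p-1}$ are all nonzero modulo $p$, so $\gcd(a_i, p) = 1$. Hence Fermat's Little Theorem applies and gives $a_i^{p-1} \equiv 1 \pmod{p}$ for every $i = 1, \ldots, n$.

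Next I would multiply these congruences (or equivalently take the $(p-1)$st power of the product) to obtain
\[
(a_1 a_2 \cdots a_n)^{p-1} \;=\; a_1^{p-1} a_2^{p-1} \cdots a_n^{p-1} \;\equiv\; 1 \pmod{p}.
\]
Thus there exists an integer $M$ such that $(a_1 a_2 \cdots a_n)^{p-1} = 1 + Mp$.

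Finally I would divide by $p$, yielding
\[
\frac{(a_1 a_2 \cdots a_n)^{p-1}}{p} \;=\; M + \frac{1}{p},
\]
and taking this modulo $1$ gives exactly $\tfrac{1}{p}$, as required. There is no real obstacle; the only subtlety worth flagging is that the hypothesis ``primitive root of $p$'' is used merely to ensure $\gcd(a_i,p)=1$, which is strictly weaker than being a primitive root, but the stronger hypothesis is what the surrounding construction supplies and is consistent with its use elsewhere in the paper.
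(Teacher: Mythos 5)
Your proof is correct and rests on the same key fact as the paper's argument, namely $a_i^{p-1} \equiv 1 \pmod{p}$; the paper merely organizes this as an induction on $n$ with step-by-step mod-$1$ manipulations, whereas you multiply the congruences at once and divide by $p$, which is if anything cleaner. Your observation that only $\gcd(a_i,p)=1$ (not the full primitive-root hypothesis) is needed is also accurate.
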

\begin{proof}
We prove the assertion in the case when $a_1, a_2$ are distinct
primitive roots of $p$. We know that 
$$\frac{1}{p} a_1^{p-1} \mod 1 = \frac{1}{p} a_1^{p-1} \mod 1 =
\frac{1}{p},$$
since $a_1$ and $a_2$ are primitive roots of $p$. Then, 
$$\frac{1}{p} a_2^{p-1} \mod 1 =
\left(\frac{1}{p} a_1^{p-1} \mod 1 \right) a_2^{p-1} \mod 1 = 
\frac{1}{p} (a_1 a_2)^{p-1} \mod 1.$$
The general case proceeds by induction on $n$.
\begin{flushright}\qed\end{flushright}
\end{proof}

\begin{corollary}
\label{cor:ud}
Let $1 < a_1 < \dots < a_n$ be numbers which are primitive roots of
$p$. Then for any $a_i$, $1 \le i \le n$, we have 
$$\left\{ \frac{a_i^k}{p} \mod 1 \mid 0 \le k \le p-1\right\} =
\left\{ \frac{a_i^k\times(a_1 a_2 \dots a_n)^{p-1}}{p} \mod 1 \mid 0
\le k \le p-1\right\}.$$
\end{corollary}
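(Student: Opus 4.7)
The plan is to observe that the preceding lemma already does almost all of the work; the corollary is essentially a pointwise restatement obtained by multiplying the identity from that lemma through by $a_i^k$. More precisely, the previous lemma asserts that $\frac{1}{p}(a_1 a_2 \cdots a_n)^{p-1} \bmod 1 = \frac{1}{p}$, which is equivalent to the congruence $(a_1 a_2 \cdots a_n)^{p-1} \equiv 1 \pmod{p}$. My first step would be to record this fact and read it in the form of an integer congruence, since that is the cleanest way to transport it through multiplication.

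Next, I would fix an arbitrary $i \in \{1, \dots, n\}$ and an arbitrary $k$ with $0 \le k \le p-1$, and multiply the congruence by $a_i^k$. This gives
\[
a_i^k (a_1 a_2 \cdots a_n)^{p-1} \equiv a_i^k \pmod{p},
\]
so the two rationals $\frac{a_i^k (a_1 \cdots a_n)^{p-1}}{p}$ and $\frac{a_i^k}{p}$ differ by an integer. Consequently their fractional parts agree:
\[
\frac{a_i^k (a_1 a_2 \cdots a_n)^{p-1}}{p} \bmod 1 \;=\; \frac{a_i^k}{p} \bmod 1.
\]
Since this holds for every $k$ in the indexing range $0 \le k \le p-1$, the two sets in the statement of the corollary coincide term-by-term, which is stronger than the set equality required.

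There is no genuine obstacle here; the only thing to be careful about is to justify invoking the lemma, which in turn requires each $a_i$ to be a primitive root of $p$, so that Fermat's little theorem (or the orbit characterisation of primitive roots used in the lemma) applies to $(a_1 \cdots a_n)^{p-1}$ modulo $p$. I would state this dependency explicitly at the start of the proof and then present the one-line congruence argument above.
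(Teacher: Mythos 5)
Your proposal is correct and matches the paper's own argument: the published proof rests on exactly the congruence $a_i^k \equiv a_i^k (a_1 a_2 \cdots a_n)^{p-1} \pmod{p}$, which you derive from the preceding lemma and translate into equality of fractional parts. The only small caveat is that this congruence needs just $p \nmid a_j$ (Fermat's little theorem), not the full primitive-root hypothesis, but invoking the stronger assumption does no harm.
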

\begin{proof}
This is true because for any $a_i$ and any positive number $k$,
$$a_i^k \mod p = [a_i^k (a_1 a_2 \dots a_n)^{p-1}] \mod p.$$
\begin{flushright}\qed\end{flushright}
\end{proof}

We now ``left-shift'' $1/p$ enough to extract some repetition of the
recurrent block in the expansions in all the bases $a_1, \dots, a_n$.
For any positive integer $i$, let $p_i$ be the $i^\text{th}$ largest
prime such that $a_1, \dots, a_n$ are simultaneously the primitive
roots of $p_i$, assuming such primes exist. We denote the integer
$$P_i = \left\lfloor \frac{1}{p_i} (a_1 a_2 \dots a_n)^{p_i
  - 1}\right\rfloor.$$
For the following construction, the basic building blocks will be the
``right-shifted'' versions of $P_i$.  Denote
$$N(i) = P_i \times (a_1 a_2 \dots a_n)^{-(p_i - 1)}.$$ 
By construction, for any $1 \le i \le n$, the representation of $N(i)$
in base $a_i$, will be a repetitive block of digits, followed by an
infinite sequence made of zeroes.

Let $f: \N \to \N$ be a function which we will specify later. Now we
construct the stage $i$ as $f(i)$ repetitions of the non-zero digits
in the expansion of $N(i)$. Consider
\begin{multline*}
S'(i) = N(i) \times (a_1 a_2 \dots a_n)^{-(p_i-1)} + 
          N(i) \times (a_1 a_2 \dots a_n)^{-2(p_i-1)} +  
          \dots +\\
          N(i) \times (a_1 a_2 \dots a_n)^{-f(i) \times (p_i-1)},
\end{multline*}
and the ``appropriately right-shifted'' version of $S'(i)$,
$$S(i) = S'(i) \times (a_1 a_2 \dots a_n)^{ \sum_{j=1}^{i-1}-f(j)(p_j-1)}.$$

Then the Liouville number normal in bases $1 < a_1 < \dots < a_n$ is 
$$\gamma = \sum_{i=1}^{\infty}  S(i).$$

We need only the consequence that for the bases $a_1, \dots, a_n$,
there are infinitely many such primes, for the above
construction. (The density estimates help lower bound the stage
lengths, but are not strictly needed for the construction.) Let $a_1,
a_2, \dots, a_n$ be numbers for which the Generalized Artin's
conjecture holds - thus there are infinitely many primes $p_1, p_2,
\dots$ such that each $a_i$ is a primitive root of each of these
primes.

To ensure that $\gamma$ is a Liouville number, we need $f$ to satisfy
the following condition: for each $i \ge 2$, the cumulative length of
the stages $1,\dots, i-1$ should be greater than the number of
non-zero digits in the expansion of $N(i+1)$ in any of the bases $a_1,
\dots, a_n$. The length of $N(i)$ is at most $\log_{a_1} p_{i+1}$. The
length of the expansion of the block in the $i^\text{th}$ stage is at
least $\log_{a_n} p(i)$. So it is sufficient to ensure that $f$
satisfies, for each $i \ge 2$, 
$$ \log_{a_1} (p_{i+1}) = o(\log_{a_n} (p_i) \times f(i)).$$ 
This property can be ensured by making $f(i)$ large enough, assuming
the Generalized Artin's conjecture.

The proof that $\gamma$ is a Liouville number follows from the
property of $f$. That $\gamma$ is a normal number in each of the bases
$2 < a_1 < \dots < a_n$ will follow from Corollary \ref{cor:ud}.

\begin{acknowledgements}
The first author would like to gratefully acknowledge the help of
David Kandathil and Sujith Vijay, for their suggestions during early
versions of this work and Mrinal Ghosh, Aurko Roy and anonymous
reviewers for helpful suggestions.
\end{acknowledgements}

\bibliographystyle{plain}
\bibliography{dim,dimrelated,fair001,main,random}

\end{document}